\ifdefined\WCLSingleColumn

\documentclass[journal,onecolumn,draftcls,12pt]{IEEEtran}
\else
\documentclass[journal]{IEEEtran}
\fi
\usepackage{amsmath,amssymb,amsthm}
\usepackage{graphicx}
\usepackage{booktabs}
\usepackage{cite}
\usepackage{hyperref}

\newtheorem{lemma}{Lemma}
\newtheorem{proposition}{Proposition}
\newtheorem{corollary}{Corollary}
\newcommand{\E}{\mathbb{E}}
\newcommand{\R}{\operatorname{Re}}
\newcommand{\F}{\mathcal{F}}

\newcommand{\wclfigurewidth}{\columnwidth}
\ifdefined\WCLSingleColumn
\renewcommand{\wclfigurewidth}{3.45in}
\fi

\begin{document}

\title{Wait to Learn: Timing the Gradient Return in Over-the-Air Split Learning over an Aging Single-Antenna Channel}

\author{Xianzhong Li, Yuzhi Yang, Tianhao Guo}

\maketitle

\begin{abstract}
In over-the-air split learning over an aging single-antenna Rayleigh channel, the mean returned gradient is the antenna-domain gradient scaled by the temporal correlation. Uncorrected training collapses in negative correlation lobes. Near a correlation zero, no nonzero scalar calibration raises a given transmission's ratio of mean power to innovation power, and on the tested grid sixteen times more pilots recover only 7\% of the training gap. Returning at the first correlation extremum with phase correction, 1.3 ms later at 3.5 GHz and 15 m/s, reaches 90.2\% accuracy versus 91.8\% with CSI. In a tested line-of-sight setting, the near-zero gap became negligible.
\end{abstract}

\begin{IEEEkeywords}
Over-the-air split learning, physical backpropagation, channel aging, reciprocity, end-to-end learning.
\end{IEEEkeywords}

\section{Introduction}
End-to-end training of a wireless transmitter needs the gradient of a receiver-side loss with respect to the transmitter parameters. Over-the-air split learning obtains it without a channel model: the receiver computes the loss derivative at its antenna and sends it back, and reciprocity with phase conjugation applies the adjoint of the channel in the air~\cite{wang2021relay,yang2022mlsp,yang2022jsac}. The idea, also called physical backpropagation, is shared by in-situ training of photonic networks~\cite{hughes2018,pai2023} and by learning in general physical systems~\cite{bosch2026unifying}. Model-free alternatives instead feed back a scalar loss over a reliable link~\cite{aitaoudia2019}. The reciprocity-based wireless schemes assume that the backward channel equals the forward one, and Yang \emph{et al.}~\cite{yang2022jsac} keep reciprocity within a slot and leave non-strict reciprocity open.

A moving terminal breaks this assumption. Channel aging is described by the temporal correlation of the channel, a Bessel-type function for diffuse scattering~\cite{clarke1968,truong2013aging}. Two classical results apply to this setting. In filtered-reference adaptive control, an update through a mismatched plant model still descends on average in the slow-adaptation limit while its phase error stays below $90^\circ$~\cite{morgan1980,boucher1991}. In stochastic optimization, the descent lemma bounds the progress of a biased, noisy gradient step by its mean and second moment~\cite{bottou2018,ajalloeian2020}. Feedback timing under temporal correlation has been analyzed for limited-feedback beamforming~\cite{huang2009}. We apply both classical results to a single receive antenna and derive three consequences for training.

A movable antenna can undo aging by returning to its forward position~\cite{keeppace}, and a receive array can move the correction into space~\cite{companion}. Under the protocol studied here, without current CSI or extra digital gradient feedback, a fixed single antenna has neither option and chooses only \emph{when} to return.

The contributions are as follows. 1) The mean returned gradient is the antenna-domain gradient rotated by the phase of the correlation, and uncorrected training collapses in negative lobes and revives in positive ones. 2) For a given forward transmission, any nonzero scalar correction built from channel statistics leaves the antenna-domain ratio of mean power to innovation power unchanged. At correlation zeros no mean is left to recover, and on the tested near-zero grid more pilots improve the phase estimate while the training gap stays largely open. 3) Returning at the first correlation extremum, with its millisecond-scale wait counted, recovers most of the loss, and the next two extrema give lower correlation, longer waits and higher cross-entropy. Because the extremum is a stationary point, the wait need not be timed precisely. In a tested line-of-sight setting the near-zero gap with exact phase became negligible, while phase correction still mattered where the correlation had a negative real part. Local one-bit phase correction suffices for real correlation and two bits for a broad asymmetric spectrum on the tested grids.

\section{System Model}\label{sec:model}
A transmitter with real parameters $\theta\in\mathbb{R}^p$ and $N_t$ independent ports maps an input $x$ to $s=f_\theta(x)\in\mathbb{C}^{N_t}$. A single-antenna receiver observes $y=h_f s+n$, $h_f\in\mathbb{C}^{1\times N_t}$, evaluates a loss $\ell$, and forms the seed $\delta=2\,\partial\ell/\partial y^*$. The antenna-domain gradient is $u=h_f^H\delta\in\mathbb{C}^{N_t}$, and the parameter gradient is $g=B(u)$, where $B:\mathbb{C}^{N_t}\to\mathbb{R}^p$ is the real-linear map $B(u)_k=\R(u^H\partial s/\partial\theta_k)$. The receiver returns $\delta$ after power normalization, and the transmitter restores the norm from a control field. With phase conjugation and exact norm restoration, the transmitter receives
\begin{equation}\label{eq:return}
r=h_r^H\delta+w,\qquad w\sim\mathcal{CN}(0,\sigma_w^2 I),
\end{equation}
where $h_r$ is the channel during the return. It may multiply $r$ by a correction $c\in\mathbb{C}$ and updates $\theta$ with $\hat g=B(cr)$. If $h_r=h_f$, $c=1$ and $w=0$, then $\hat g=g$.

Between the forward and backward transmissions the receiver moves by $d$ wavelengths. Both channels come from the same zero-mean proper Gaussian angular field~\cite{clarke1968} with unit-variance entries, so $h_r=\rho h_f+e$ with $e\sim\mathcal{CN}(0,(1-|\rho|^2)I)$ independent of $h_f$, and
\begin{equation}\label{eq:rho}
\rho(d)=\int p(\vartheta)\,e^{\,j2\pi d\cos(\vartheta-\vartheta_m)}\,d\vartheta
\end{equation}
for angular power density $p\ge0$ with unit integral and motion direction $\vartheta_m$. For an isotropic spectrum $\rho=J_0(2\pi d)$, with zeros near $d=0.383$, $0.879$, $1.377$ and extrema at $d_1=0.610$ ($\rho=-0.403$), $1.117$ ($0.300$) and $1.619$ ($-0.250$). A terminal at speed $v$ with return delay $\tau$ has $d=v\tau/\lambda$. At 3.5\,GHz the first zero is reached after $\tau=2.2$\,ms at 15\,m/s and $1.0$\,ms at 33\,m/s, so near-zero returns arise at vehicular speeds and millisecond delays. The Rayleigh taps of the 3GPP TDL models use the classical Doppler spectrum~\cite[Sec.~7.7.2]{tr38901}, so each such tap ages as $J_0$. As an extension, Section~IV adds a line-of-sight component of Rician factor $K$ at angle $\beta$ to the motion, common to all ports and with a uniformly random phase per block. The port-averaged cross-covariance $\rho_K=\sum_t\E[h_{r,t}h_{f,t}^*]/N_t$ then equals $(Ke^{\,j2\pi d\cos\beta}+\rho(d))/(K+1)$, so $|\rho_K|\ge(K-1)/(K+1)$ for $K>1$ and the zeros disappear. The random phase makes this channel a non-Gaussian mixture with correlated ports, so the independent-port Gaussian model and the analysis of Section~III refer to $K=0$, and $\rho_K$ is not the only coefficient of the conditional mean.

Let $\F$ collect everything fixed before the return: $\theta$, $x$, the forward noise, $h_f$, the receiver state, and any calibration from earlier pilots. Then $\delta$, $B$ and $c$ are $\F$-measurable, and $e$ and $w$ are independent of each other and of $\F$. The loss is real-differentiable in $y$.

\section{What Calibration Can and Cannot Do}
\begin{lemma}\label{lem:rot}
$\E[r\mid\F]=\rho^*u$. For $u\neq0$, the negative mean of the uncorrected return is an antenna-domain descent direction if and only if $\R\rho>0$. For $\rho\neq0$ and $c=e^{j\arg\rho}$, $\E[\hat g\mid\F]=|\rho|\,g$ for every real-linear $B$.
\end{lemma}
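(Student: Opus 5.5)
The plan is to compute the conditional mean first and derive the other two claims from it. Substitute $h_r=\rho h_f+e$ into \eqref{eq:return}. Since $h_r^H=\rho^* h_f^H+e^H$, we get $r=\rho^* h_f^H\delta+e^H\delta+w=\rho^* u+e^H\delta+w$. Conditionally on $\F$, the seed $\delta$ is fixed. The vectors $e$ and $w$ are zero-mean and independent of $\F$, so $\E[e^H\delta\mid\F]=\E[e]^H\delta=0$ and $\E[w\mid\F]=0$. This gives $\E[r\mid\F]=\rho^*u$. The only point to check is that $\delta$ does not depend on $e$ or $w$. This holds because $\delta$ is formed from the forward observation and is declared $\F$-measurable in Section~\ref{sec:model}.

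For the descent claim, I will use the natural antenna-domain notion for a real-valued loss of complex variables. A direction $v\in\mathbb{C}^{N_t}$ is a descent direction at $u$ if the first-order change $\R(u^Hv)$ is negative, since $u$ is the (conjugate-convention) gradient with respect to $s^*$. Take $v=-\E[r\mid\F]=-\rho^*u$. Then $\R(u^Hv)=-\R(\rho^*)\|u\|^2=-\R(\rho)\,\|u\|^2$. For $u\neq0$ this is negative if and only if $\R\rho>0$, which proves the equivalence. The main subtlety, more than any computation, is stating this convention explicitly. I would add one line justifying it: by the chain rule the directional derivative of $\ell$ along $ds$ is $\R(2\,\partial\ell/\partial y^*\cdot (h_f ds)^*)=\R(u^H ds)$, up to the factor convention fixed by $\delta$.

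For the corrected case, fix $c=e^{j\arg\rho}$, which is $\F$-measurable. Then $\E[cr\mid\F]=c\rho^*u=e^{j\arg\rho}|\rho|e^{-j\arg\rho}u=|\rho|u$. Because $B$ is $\F$-measurable and real-linear, the conditional expectation commutes with it. To see this, write $B(z)_k=\R(z^H a_k)$ with $\F$-measurable $a_k=\partial s/\partial\theta_k$, and use linearity of expectation and of $\R$. Hence $\E[\hat g\mid\F]=B(|\rho|u)=|\rho|B(u)=|\rho|g$, where the last step uses real homogeneity with the real scalar $|\rho|\ge0$. The statement says this holds for every real-linear $B$, so I would phrase the argument for a general $\F$-measurable real-linear $B$. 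In finite dimensions such a $B$ is automatically continuous and integrable against the Gaussian terms, and it equals $\sum_k$ of real-linear functionals of $\R z$ and $\operatorname{Im} z$. The remaining integrability condition is finite second moments of $e$ and $w$, which holds.
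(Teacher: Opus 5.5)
Your proposal is correct and follows the paper's proof. Both arguments obtain $\E[h_r^H\mid\F]=\rho^*h_f^H$ from the independence of $e$ and $\F$, read the descent condition off $\R\langle u,\rho^*u\rangle=\R\rho\,\|u\|^2$, and use that $c\rho^*=|\rho|$ is real and therefore commutes with the real-linear $B$; your chain-rule justification that the first-order loss change is $\R(u^H ds)$ makes explicit the descent convention that the paper leaves implicit.
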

\begin{proof}
$\E[h_r^H\mid\F]=\rho^*h_f^H$ by independence of $e$ and $\F$, and $\R\langle u,\rho^*u\rangle=\R\rho\,\|u\|^2$. With $c=e^{j\arg\rho}$, $c\rho^*=|\rho|$ is real, and real scaling commutes with $B$.
\end{proof}
The antenna-domain condition is the $90^\circ$ condition of filtered-reference adaptation in the slow-adaptation limit~\cite{morgan1980,boucher1991}. Without correction and with complex $\rho$, the parameter-domain direction of a real-linear $B$ is not fixed by $\arg\rho$ alone.

Write $r=\rho^*u+z$ with $z=e^H\delta+w$. Given $\F$, $z\sim\mathcal{CN}(0,\nu I)$ with $\nu=(1-|\rho|^2)|\delta|^2+\sigma_w^2$. Regard $B$ as a real $p\times2N_t$ matrix acting on stacked real and imaginary parts.
\begin{proposition}\label{prop:inv}
For every $\F$-measurable $c\in\mathbb{C}$,
\begin{align}
\E[\hat g\mid\F]&=B(c\rho^*u),\label{eq:mean}\\
\E\big[\|\hat g-\E[\hat g\mid\F]\|^2\,\big|\,\F\big]&=\tfrac12|c|^2\nu\|B\|_F^2.\label{eq:moments}
\end{align}
For $c\neq0$, $\nu>0$ and $u\neq0$, the antenna-domain ratio $\|\E[cr\mid\F]\|^2/\E[\|cz\|^2\mid\F]=|\rho|^2\|u\|^2/(N_t\nu)$ is the same for every such $c$. For $c\rho^*\neq0$, $\nu>0$ and $\|B\|_F>0$, the parameter-domain ratio of the squared mean to~\eqref{eq:moments} equals $2|\rho|^2\|B(e^{j\psi}u)\|^2/(\nu\|B\|_F^2)$ with $\psi=\arg(c\rho^*)$, so it depends on $c$ only through $\psi$. For a plain stochastic-gradient step $\theta^+=\theta-\eta\hat g$ with $\rho\neq0$, $c=e^{j\arg\rho}$ and an $L$-smooth expected loss $R$,
\begin{multline}\label{eq:descent}
\E[R(\theta^+)\mid\theta]-R(\theta)\le-\eta|\rho|\,\|\nabla R\|^2\\
+\tfrac{L\eta^2}{2}\,\E\big[|\rho|^2\|g\|^2+\tfrac12\nu\|B\|_F^2\,\big|\,\theta\big].
\end{multline}
\end{proposition}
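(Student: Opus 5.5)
The plan is to work conditionally on $\F$ throughout and use the decomposition $r=\rho^*u+z$, where $z=e^H\delta+w$ given $\F$ is $\mathcal{CN}(0,\nu I)$. The conditional mean, the covariance and the two ratios then follow from linearity. The descent bound~\eqref{eq:descent} is obtained by averaging over $\F$ given $\theta$.

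\textbf{Mean.} Since $c$ and $B$ are $\F$-measurable and $B$ is real-linear, $\E[\hat g\mid\F]=B(c\,\E[r\mid\F])$. Lemma~\ref{lem:rot} gives $\E[r\mid\F]=\rho^*u$, which yields~\eqref{eq:mean}.

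\textbf{Covariance.} Stack real and imaginary parts. For proper $cz\sim\mathcal{CN}(0,|c|^2\nu I)$, the stacked vector $[\R(cz);\operatorname{Im}(cz)]$ has covariance $\tfrac12|c|^2\nu I_{2N_t}$. This holds because a scalar $c$ keeps $cz$ proper and circularly symmetric, so it only rescales the variance. The fluctuation $\hat g-\E[\hat g\mid\F]=B(cz)$ therefore has
\[
\E\big[\|B(cz)\|^2\mid\F\big]=\operatorname{tr}\!\big(B\,\tfrac12|c|^2\nu I\,B^T\big)=\tfrac12|c|^2\nu\|B\|_F^2 .
\]
Properness of $z$ needs $e^H\delta$ to be proper given $\F$. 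This holds because $e$ is proper Gaussian and independent of the $\F$-measurable $\delta$, and $e^H\delta$ has covariance $(1-|\rho|^2)|\delta|^2 I$. I read $|\delta|$ as a scalar seed magnitude.

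\textbf{Ratios.} In the antenna domain, $\|c\rho^*u\|^2=|c|^2|\rho|^2\|u\|^2$ and $\E\|cz\|^2=|c|^2N_t\nu$. The factor $|c|^2$ cancels, giving $|\rho|^2\|u\|^2/(N_t\nu)$. In the parameter domain, write $c\rho^*=|c||\rho|e^{j\psi}$. Real homogeneity of $B$ gives $\|B(c\rho^*u)\|^2=|c|^2|\rho|^2\|B(e^{j\psi}u)\|^2$. Dividing by $\tfrac12|c|^2\nu\|B\|_F^2$ again cancels $|c|^2$, which leaves the stated expression. It depends on $c$ only through $\psi$.

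\textbf{Descent.} Take $c=e^{j\arg\rho}$. By Lemma~\ref{lem:rot}, $\E[\hat g\mid\F]=|\rho|g$, and the variance is $\tfrac12\nu\|B\|_F^2$ since $|c|=1$. The $L$-smoothness inequality gives
\[
R(\theta^+)\le R(\theta)-\eta\langle\nabla R,\hat g\rangle+\tfrac{L\eta^2}{2}\|\hat g\|^2 .
\]
Take expectation given $\F$ and then given $\theta$, using the tower property. The second moment splits as $\E[\|\hat g\|^2\mid\F]=|\rho|^2\|g\|^2+\tfrac12\nu\|B\|_F^2$. The linear term becomes $-\eta|\rho|\langle\nabla R,\E[g\mid\theta]\rangle$.

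\textbf{Main obstacle.} The delicate step is identifying $\E[g\mid\theta]$ with $\nabla R(\theta)$. This requires $R$ to be the expected loss over $x$, the forward noise and $h_f$, together with an interchange of differentiation and expectation. For that I would assume the stated real-differentiability and a dominated-convergence condition, noting that $g=B(u)$ is exactly the chain-rule gradient of $\ell$ in $\theta$. Once this is in place, the linear term is $-\eta|\rho|\|\nabla R\|^2$ and~\eqref{eq:descent} follows.
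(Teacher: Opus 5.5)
Your proposal is correct and follows the paper's proof. The decomposition $cr=c\rho^*u+cz$ with real covariance $\tfrac12|c|^2\nu I_{2N_t}$ gives the moments and both ratios, and the descent bound follows from $\E[\hat g\mid\theta]=|\rho|\nabla R(\theta)$ with $R$ the expected loss over $x$, the forward noise and $h_f$ (the paper also fixes the receiver parameters there, which you should state) and differentiation and expectation interchanged; the only difference is that you write out the $L$-smoothness step, which the paper takes directly from \cite[Lemma~4.2]{bottou2018}.
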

\begin{proof}
$cr=c\rho^*u+cz$, and $cz$ is $\mathcal{CN}(0,|c|^2\nu I)$ given $\F$. A proper complex Gaussian vector with per-entry variance $|c|^2\nu$ has real covariance $\tfrac12|c|^2\nu I_{2N_t}$, which gives~\eqref{eq:mean}--\eqref{eq:moments} and both ratios. Let $R$ be the expected loss over $x$, the forward noise and $h_f$ at fixed receiver parameters, drawn independently of $\theta$, with differentiation and expectation interchangeable. Then $\E[g\mid\theta]=\nabla R(\theta)$. Since $|\rho|$ is fixed by $d$, $\E[\hat g\mid\theta]=|\rho|\nabla R(\theta)$, and \eqref{eq:descent} is~\cite[Lemma~4.2]{bottou2018} with these moments.
\end{proof}
With a unit-modulus correction, statistical pilots act only on $\psi$, through the phase error of the estimate. For $\rho\neq0$, dividing by $\rho^*$ restores the mean and amplifies the innovation power by $1/|\rho|^2$, which is the same trade expressed as a larger effective step. Inverse and rotation-only corrections are compared for receive arrays in~\cite{companion,lessons}. The factor $|\rho|$ and the innovation power $(1-|\rho|^2)|\delta|^2$ are set by the displacement. Restarting a forward and return exchange with the same latency leaves its correlation unchanged, so it does not avoid the near-zero return. The bound~\eqref{eq:descent} analyzes a plain gradient step to isolate the mechanism, while the simulations use Adam~\cite{kingma2015adam}. One-step descent and scalar quadratic stability conditions for movable-antenna compensation are derived in~\cite[Sec.~IV-A]{keeppace}.

\begin{corollary}[Zeros]\label{cor:zero}
If $\rho(d)=0$, every update that is homogeneous and real-linear in $(h_r,h_r^*)$ with $\F$-measurable coefficients has zero conditional mean.
\end{corollary}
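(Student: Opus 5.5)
The plan is to write a general such update explicitly and then take its conditional mean, using the decomposition $h_r=\rho h_f+e$ from Section~\ref{sec:model}. An update that is homogeneous and real-linear in $(h_r,h_r^*)$ with $\F$-measurable coefficients has the form
\[
\hat g=\sum_{t=1}^{N_t}\big(a_t h_{r,t}+b_t h_{r,t}^*\big),
\]
with $a_t,b_t$ $\F$-measurable vectors. This form should cover $B(cr)$ once the additive noise $w$ is treated separately. Since $w$ is zero-mean and independent of $\F$, any term that is linear in $w$ with $\F$-measurable coefficients has zero conditional mean, so I would state the corollary for the channel-dependent part and note that the noise contributes nothing. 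Homogeneity excludes a constant $\F$-measurable offset, and that offset is the only term that could survive.

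The key step is to compute $\E[h_r\mid\F]$ and $\E[h_r^*\mid\F]$. Because $e$ is independent of $\F$ and zero-mean, $\E[h_r\mid\F]=\rho h_f$. Taking conjugates gives $\E[h_r^*\mid\F]=\rho^*h_f^*$. This is the same computation as in the proof of Lemma~\ref{lem:rot}. At a zero, $\rho(d)=0$, so both expectations vanish. The coefficients $a_t,b_t$ are $\F$-measurable and can be pulled out of the conditional expectation, which gives $\E[\hat g\mid\F]=\sum_t(a_t\cdot0+b_t\cdot0)=0$. In particular, Proposition~\ref{prop:inv}'s formula~\eqref{eq:mean} gives $B(c\rho^*u)=0$ for every $c$, which is consistent with this.

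I expect the only real obstacle to be integrability and measurability. Pulling out the coefficients needs $\E[\|a_t\|\,|h_{r,t}|\mid\F]<\infty$. This holds because $h_{r,t}$ is Gaussian with finite variance and $a_t$ is fixed given $\F$. I would also note that the argument depends on $e$ being independent of $h_f$, which is the $K=0$ Gaussian model. It does not cover the Rician mixture, where $\rho_K$ is not the full conditional mean.
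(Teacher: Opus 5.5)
Your proposal is correct and follows the paper's proof. At $\rho=0$ the model gives $h_r=e$, which is zero-mean and independent of $\F$, so $\E[h_r\mid\F]=\E[h_r^*\mid\F]=0$ and the $\F$-measurable coefficients come out of the conditional expectation. Your separate treatment of the zero-mean noise $w$, which brings $B(cr)$ under the corollary, and your integrability remark are sound additions that the paper leaves implicit.
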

\begin{proof}
For $\rho=0$, $h_r=e$, which is zero-mean and independent of $\F$. The $\F$-measurable coefficients move outside the conditional expectation.
\end{proof}
Escaping the corollary therefore requires information or operations outside its class, for example nonlinear use of the return, extra feedback, or current CSI of the return channel.

\emph{Delayed return.} Since only the displacement changes $|\rho|$, the receiver can hold the seed when the natural return delay falls near a zero $d_0$ and send it at the next extremum $d_1$ of $|\rho|$, correcting the phase there. Peaks in the temporal correlation of the output SNR were shown to track link performance and proposed for timing adaptive-modulation feedback and MAC slots~\cite{tomasi2011}. Here the returned quantity is a gradient, and the phase of $\rho$ at the extremum fixes its sign. For the isotropic spectrum $\rho(d_1)$ is real and negative, so the correction is a sign flip. The wait costs $(d_1-d_0)\lambda/v$ seconds per forward and return exchange, $0.227\lambda/v$ from the first zero, which is 1.3\,ms at 15\,m/s and $0.6$\,ms at 33\,m/s. Because $d_1$ is a stationary point of $\rho$, the timing need not be exact: for $J_0$, $|\rho|\ge0.32$ within $\pm0.1$ wavelength of $d_1$, and the sign stays negative between the first two zeros, $j_{0,1}/(2\pi)\approx0.383$ and $j_{0,2}/(2\pi)\approx0.879$. At the tested displacements 0.45, 0.55 and 0.70 inside the window $0.45\le d\le0.70$, phase correction reaches $87.9$--$90.0\%$ in Section~IV. The window lasts $1.43$\,ms at 15\,m/s, a duration of 2.9 slots of 0.5\,ms in the 30-kHz NR numerology, and $0.65$\,ms at 33\,m/s. A residual carrier frequency offset of magnitude below 64\,Hz keeps the added phase drift over a 1.3-ms wait below $30^\circ$, an effect not included in our simulations. Choosing the wait requires the terminal speed and the angular spectrum, or the correlation at the candidate delays. Later extrema offer less: for $J_0$ the extremal $|\rho|$ decays, so $|\rho|^2/(1-|\rho|^2)$ falls from $0.194$ at $d_1$ to $0.099$ and $0.066$ at the next two, while the wait grows.

\emph{Few-bit phase.} With true phase $\phi=\arg\rho$, $\rho\neq0$, $u\neq0$, and a correction $e^{jq}$, antenna-domain mean descent requires $\cos(\phi-q)>0$. If the circular error of the phase estimate is at most $\epsilon$, then $\epsilon+|\mathrm{wrap}(\hat\phi-q)|<\pi/2$ suffices. For real $\rho$, $\phi\in\{0,\pi\}$ and a correctly estimated sign bit is exact. For complex $\rho$, a uniform nearest-neighbor $b$-bit phase quantizer adds at most $\pi/2^b$: one bit can reach $90^\circ$, two bits at most $45^\circ$. Few-bit phase feedback has also been studied for over-the-air computation~\cite{dahl2026}.

\section{Simulations}
\emph{Setup.} We train eight equiprobable messages over four channel uses with $N_t=2$ and a single receive antenna. This end-to-end codebook task is a deliberately small instance of over-the-air split learning, in which the gradient return is the only learned path through the channel. Movable-antenna compensation with 3GPP CDL channels is evaluated in~\cite{keeppace}. The transmitter learns a $2\times4$ codebook per message with unit-norm columns, and the receiver is a softmax classifier on the 16 real features of the temporal Gram matrix of the received block. The forward noise variance is $0.10$. The restored norm $b$ is $|\delta|$ quantized to 16 bits on a logarithmic grid clipped to $[2^{-20},2^{20}]$, and the reverse noise $n_r\sim\mathcal{CN}(0,0.0009I)$ is added before restoration. For $|\delta|\ge10^{-30}$ the implemented return is $\kappa h_r^H\delta+b\,n_r$ with $\kappa=b/|\delta|$, and a zero seed returns zero. Its conditional mean is $\kappa\rho^*u$ and its per-entry noise variance is $\kappa^2(1-|\rho|^2)|\delta|^2+0.0009b^2$, which replace $\rho^*u$ and $\nu$ in~\eqref{eq:mean}--\eqref{eq:moments}, while~\eqref{eq:descent} is stated for exact restoration. The four channel uses of a block share one channel pair, and the seed of each use is normalized and restored separately and returned through the same $h_r$, so~\eqref{eq:mean}--\eqref{eq:moments} hold for each use while the innovations of the four uses are correlated through the shared $e$. Encoder and receiver use Adam with learning rates $0.01$ and $0.03$ and batch size 32. Corrections estimate $\rho$ from independent forward and return pilot pairs (64 unless stated), with the forward noise on the forward pilot and noise variance $0.0009$ on the return pilot. The estimate $\hat\rho$ is the trace of the sample cross-covariance divided by the trace of the noise-debiased forward covariance with a $0.02$ ridge and a $0.02$ eigenvalue floor. Phase correction uses $c=e^{j\arg\hat\rho}$, the 2-bit and 1-bit versions round $\arg\hat\rho$ to the nearest multiple of $\pi/2$ and to $0$ or $\pi$ by the sign of $\R\hat\rho$, and the scalar correction is $c=1/\hat\rho^*$ without a magnitude floor. All corrections are applied in the antenna domain. ``Oracle'' arms use the exact $\rho$ and serve as diagnostics. Over-the-air methods use 4032 training blocks, 32\,448 RF vectors with 64 pilot pairs. The CSI baseline trains with digital gradients computed from pilot-estimated CSI on 5440 blocks, 32\,640 RF vectors, and reaches $91.8\%$ test accuracy. It serves as a digital reference rather than a matched-latency baseline, and it exchanges 1\,392\,640 digital control bits against 262\,144 for the over-the-air methods, counted without protocol overhead.

\emph{Statistics.} Training results use 8 independent seeds and 8192 test blocks. The lobe, near-zero, delay, few-bit and array experiments share one seed group, and the extremum comparison of Table~\ref{tab:wait} uses a second group. Primary endpoints were fixed before computation: displacements are averaged within each seed over prespecified grids, followed by a paired bootstrap over seeds (20\,000 resamples), and brackets are 95\% intervals. Other single-cell accuracies are descriptive. ``1-bit'' and ``2-bit'' refer to uniform quantization of the locally applied phase correction, while the pilot statistics are shared as in the continuous case. A separate antenna-domain check over 16 geometries (isotropic and broad asymmetric spectra, two motion directions, four displacements, 65\,536 draws per cell) finds the conditional-mean direction of the uncorrected return within $2.12$ standard errors of $\cos(\arg\rho)$, as Lemma~\ref{lem:rot} requires.

\emph{Lobes.} For the isotropic spectrum (Fig.~\ref{fig:lobes}), uncorrected training collapses in the first negative lobe ($36.2\%$ at $d=0.55$) and revives in the second positive lobe ($89.4\%$ at $d=1.05$, equal to phase correction to one decimal). Averaged over the negative-lobe grid $\{0.40,\allowbreak 0.45,\allowbreak 0.55,\allowbreak 0.70,\allowbreak 0.85\}$ versus the positive-lobe grid $\{0.90,1.05,1.20,1.35\}$, uncorrected cross-entropy (CE) improves by $0.709$ $[0.636,0.792]$. Phase correction stays between $87.9\%$ and $90.0\%$ from $d=0.45$ to $0.70$, across the lobe around $d_1$, and drops only close to the zeros of $J_0$.

\begin{figure}[t]\centering
\includegraphics[width=\wclfigurewidth]{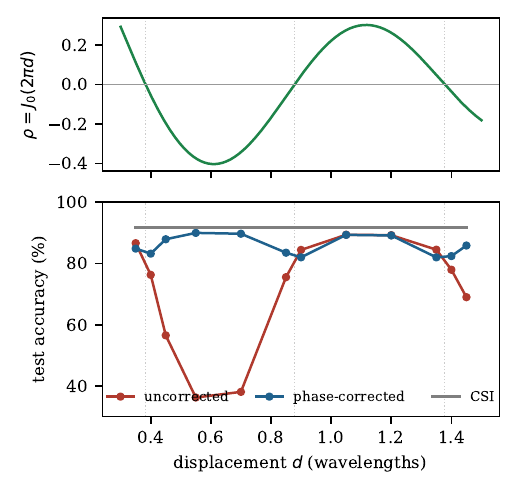}
\caption{Isotropic spectrum, single receive antenna. Top: $\rho=J_0(2\pi d)$, with dotted lines at its zeros. Bottom: test accuracy, mean over 8 seeds.}\label{fig:lobes}
\end{figure}

\emph{Near the first zero.} At $d\in\{0.38,0.40\}$, phase correction with exact $\rho$ is worse than at the reference displacements $\{0.55,0.70\}$ by $0.156$ CE $[0.134,0.178]$. Raising the pilots from 64 to 1024 pairs (35\,328 RF vectors, training blocks fixed) recovers $7.3\%$ of that gap $[-6.4\%,19.7\%]$ (Fig.~\ref{fig:dead}, top), defined as the mean CE improvement from 64 to 1024 pairs near the zero over the mean exact-$\rho$ gap and recomputed in each bootstrap resample. The extra pilots do improve the phase: in 3000 Monte Carlo runs of the same pilot estimator, the mean cosine of the phase error at $d=0.40$ rises from $0.50$ to $0.95$. The factor it multiplies is $|\rho|=0.055$ there and $0.009$ at $d=0.38$, and the exact-$\rho$ oracle, which has no phase error at all, still leaves the gap of $0.156$. Both observations are consistent with Proposition~\ref{prop:inv}. Accuracy near the zero stays around $82$--$84\%$ because the receiver keeps learning: with a frozen random initial transmitter, receiver training alone reaches $83.0\%$, and joint training improves CE over it by only $0.016$ $[0.004,0.029]$ near the zero (exact-$\rho$ correction), against $0.172$ $[0.141,0.206]$ at the reference displacements (64-pair correction). Because the receiver co-adapts with the codebook, this comparison measures the gain of joint training over receiver-only adaptation.

\begin{figure}[t]\centering
\includegraphics[width=\wclfigurewidth]{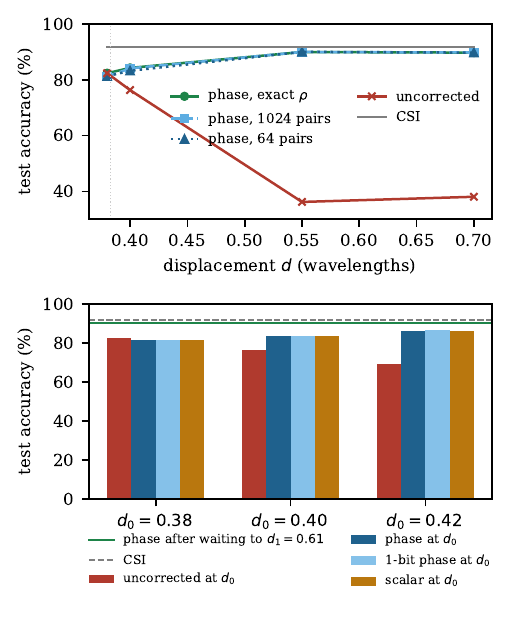}
\caption{Near the first zero of $J_0$, single receive antenna, mean over 8 seeds. Top: phase correction with estimated and exact $\rho$. Bottom: corrections at the natural delay $d_0$ versus phase correction after waiting to $d_1=0.61$.}\label{fig:dead}
\end{figure}

\emph{Delayed return.} Returning at $d_1=0.610$ instead of $d_0\in\{0.38,0.40,0.42\}$ improves CE by $0.153$ $[0.134,0.172]$ over the best correction available at $d_0$, chosen by mean CE among continuous, 2-bit and 1-bit phase and scalar correction (1-bit phase in all three cases), with this selection not included in the interval (Fig.~\ref{fig:dead}, bottom). Waiting reaches $90.2\%$ accuracy against $91.8\%$ for CSI. At $3.5$\,GHz and $15$\,m/s the additional wait is $1.08$--$1.31$\,ms per forward and return exchange. How the waits of the exchanges in a batch overlap depends on the schedule, which we do not model. Table~\ref{tab:wait} repeats the comparison with the second seed group and adds the next two extrema. Among the three tested extrema, the first has the lowest mean CE and the shortest wait: relative to $d_1$, CE rises by $0.017$ $[0.006,0.024]$ at the second and $0.035$ $[0.029,0.040]$ at the third. This ordering is descriptive, and the prespecified margins of $0.02$ and $0.04$ for these two differences were not reached. Against CSI, waiting to $d_1$ remains $0.076$ $[0.064,0.089]$ CE worse.

\begin{table}[t]\centering
\caption{Return delay and training, isotropic spectrum, phase correction with 64 pilot pairs, second seed group of 8. Training grid $d=0.383$, $0.610$, $1.117$, $1.619$ (rounded zero and extremum locations). Additional wait from $d=0.383$ at 3.5\,GHz and 15\,m/s.}\label{tab:wait}
\footnotesize\setlength{\tabcolsep}{4pt}
\begin{tabular}{lccccc}\toprule
Return at $d$ & $\rho$ & $\frac{|\rho|^2}{1-|\rho|^2}$ & wait (ms) & CE & acc.\ (\%)\\\midrule
0.383 (near zero) & $-0.001$ & $<10^{-6}$ & 0 & 0.608 & 82.4\\
0.610 & $-0.403$ & 0.194 & 1.30 & 0.427 & 89.9\\
1.117 & $\phantom{-}0.300$ & 0.099 & 4.19 & 0.443 & 89.2\\
1.619 & $-0.250$ & 0.066 & 7.06 & 0.461 & 88.5\\\midrule
CSI & n/a & n/a & n/a & 0.350 & 91.6\\\bottomrule
\end{tabular}
\end{table}

\emph{Few bits.} Table~\ref{tab:bits} lists accuracies. For the isotropic spectrum, where $\rho$ is real, 1-bit correction is non-inferior to continuous correction over $\{0.55,0.70,1.05,1.20\}$ (CE difference $-0.0012$ $[-0.0026,-0.0001]$, margin $0.02$). For a broad asymmetric spectrum ($70\%$ von Mises at $35^\circ$ with concentration 2, $30\%$ uniform, motion along $\vartheta_m=0$), 2-bit correction is non-inferior when averaged over $\{0.1,\dots,0.5\}$ ($0.0075$ $[0.0061,0.0088]$), while 1-bit correction loses $0.075$ CE $[0.068,0.083]$ at the displacements where $|\arg\rho|$ lies in $[60^\circ,120^\circ]$.

\begin{table}[t]\centering
\caption{Test accuracy (\%) of phase correction with continuous, 2-bit and 1-bit phase, mean over 8 seeds (descriptive)}\label{tab:bits}
\footnotesize
\begin{tabular}{lcccc}\toprule
Spectrum, $d$ & $\arg\rho$ & cont. & 2 bit & 1 bit\\\midrule
isotropic, 0.55 & $180^\circ$ & 90.0 & 90.0 & 90.0\\
isotropic, 1.05 & $0^\circ$ & 89.4 & 89.4 & 89.4\\
broad, 0.30 & $63^\circ$ & 90.8 & 90.6 & 88.8\\
broad, 0.40 & $107^\circ$ & 90.3 & 90.2 & 86.4\\
broad, 0.50 & $148^\circ$ & 90.4 & 89.9 & 89.9\\\bottomrule
\end{tabular}
\end{table}

\emph{Line of sight.} We add the line-of-sight component of Section~II at $\beta=60^\circ$, keeping the task and protocol, with a separate group of 8 seeds. The near-zero gap with exact statistics falls from $0.158$ at $K=0$ to $0.002$ at $K=1$, $0.0004$ at $K=3$ and $-0.00006$ $[-0.00016,0.00005]$ at $K=10$. The means meet the prespecified criteria of no increase in $K$ and a gap below $0.02$ at $K=10$, while the reductions from $K=1$ to 3 and from $K=3$ to 10 are not individually resolved. Phase correction still matters: at $K=10$, averaged over $d\in\{0.55,0.6098,0.70\}$, where the line-of-sight rotation makes $\R\rho_K<0$, uncorrected training is worse than 64-pair phase correction by $0.651$ CE $[0.561,0.735]$.

\emph{Single antenna versus array.} With a 3-element half-wavelength receive array on the same task and number of training blocks, the near-zero gap with exact statistics (an oracle diagnostic) closes when the seed is corrected before the return by the unitary polar factor of the conditional-mean matrix~\cite{companion} ($-0.0001$ $[-0.0005,0.0002]$). A scalar correction by the exact trace correlation, an auxiliary diagnostic, leaves $0.080$ $[0.069,0.092]$. The paired difference between the single-antenna gap and the array gap with polar correction is $0.156$ $[0.134,0.179]$. The difference reflects spatial degrees of freedom, array gain and the matrix correction with its side information together, not antenna count alone. How the element geometry shapes the sign of the empirical correlation across an array is studied in~\cite{lattice}.

\section{Conclusion}
For a single receive antenna, the aging channel scales the mean returned gradient by its temporal correlation. The phase of the correlation fixes the sign of the antenna-domain mean descent, and in our simulations it predicted where uncorrected training fails and revives. For a given forward transmission, no nonzero scalar correction changes the antenna-domain ratio of mean power to innovation power, and on the tested near-zero grid more independent pilots improved the phase estimate but recovered only a small part of the training gap. A fixed single antenna can still use time, returning the gradient within the lobe around the first correlation extremum with a sign correction at a millisecond-scale delay. Waiting matters under diffuse scattering. In the tested line-of-sight setting the near-zero gap with exact phase was negligible, and 64-pair phase correction outperformed uncorrected training where the correlation had a negative real part. The main evaluation assumes Rayleigh fading and the extension a Rician channel with random line-of-sight phase, both with independent realizations, a known correlation function and a synchronous wait. Estimating the speed or correlation online is left to protocol design.

\bibliographystyle{IEEEtran}
\bibliography{refs}

@article{boucher1991,
  author  = {C. C. Boucher and S. J. Elliott and P. A. Nelson},
  title   = {Effect of errors in the plant model on the performance of algorithms for adaptive feedforward control},
  journal = {IEE Proceedings F (Radar and Signal Processing)},
  volume  = {138}, number = {4}, pages = {313--319}, year = {1991},
  doi     = {10.1049/ip-f-2.1991.0042}
}

@article{morgan1980,
  author  = {D. R. Morgan},
  title   = {An analysis of multiple correlation cancellation loops with a filter in the auxiliary path},
  journal = {IEEE Transactions on Acoustics, Speech, and Signal Processing},
  volume  = {28}, number = {4}, pages = {454--467}, year = {1980},
  doi     = {10.1109/TASSP.1980.1163430}
}

@article{yang2022jsac,
  author  = {Yuzhi Yang and Zhaoyang Zhang and Yuqing Tian and Zhaohui Yang and Chongwen Huang and Caijun Zhong and Kai-Kit Wong},
  title   = {Over-the-Air Split Machine Learning in Wireless {MIMO} Networks},
  journal = {IEEE Journal on Selected Areas in Communications},
  volume  = {41}, number = {4}, pages = {1007--1022}, year = {2023},
  doi     = {10.1109/JSAC.2023.3242701}
}

@inproceedings{yang2022mlsp,
  author    = {Yuzhi Yang and Zhaoyang Zhang and Zhaohui Yang},
  title     = {Over-the-Air Split Learning with {MIMO}-Based Neural Network and Constellation-Based Activation},
  booktitle = {Proc. IEEE 32nd Int. Workshop on Machine Learning for Signal Processing (MLSP)},
  address   = {Xi'an, China}, pages = {1--6}, year = {2022},
  doi       = {10.1109/MLSP55214.2022.9943518}
}

@inproceedings{wang2021relay,
  author    = {Rui Wang and Yi Jiang and Wei Zhang},
  title     = {A Distributed {MIMO} Relay Scheme Inspired by Backpropagation Algorithm},
  booktitle = {Proc. IEEE Global Communications Conference (GLOBECOM)},
  pages     = {1--6}, year = {2021}, doi = {10.1109/GLOBECOM46510.2021.9685608}
}

@article{hughes2018,
  author  = {Tyler W. Hughes and Momchil Minkov and Yu Shi and Shanhui Fan},
  title   = {Training of photonic neural networks through in situ backpropagation and gradient measurement},
  journal = {Optica}, volume = {5}, number = {7}, pages = {864--871}, year = {2018},
  doi     = {10.1364/OPTICA.5.000864}
}

@article{pai2023,
  author  = {Sunil Pai and Zhanghao Sun and Tyler W. Hughes and Taewon Park and Ben Bartlett and Ian A. D. Williamson and Momchil Minkov and Maziyar Milanizadeh and Nathnael Abebe and Francesco Morichetti and Andrea Melloni and Shanhui Fan and Olav Solgaard and David A. B. Miller},
  title   = {Experimentally realized in situ backpropagation for deep learning in photonic neural networks},
  journal = {Science}, volume = {380}, number = {6643}, pages = {398--404}, year = {2023},
  doi     = {10.1126/science.ade8450}
}

@article{aitaoudia2019,
  author  = {Fay\c{c}al {Ait Aoudia} and Jakob Hoydis},
  title   = {Model-free Training of End-to-end Communication Systems},
  journal = {IEEE Journal on Selected Areas in Communications},
  volume  = {37}, number = {11}, pages = {2503--2516}, year = {2019},
  doi     = {10.1109/JSAC.2019.2933891}
}

@article{clarke1968,
  author  = {R. H. Clarke},
  title   = {A statistical theory of mobile-radio reception},
  journal = {Bell System Technical Journal}, volume = {47}, number = {6}, pages = {957--1000}, year = {1968},
  doi     = {10.1002/j.1538-7305.1968.tb00069.x}
}

@article{truong2013aging,
  author  = {Truong, Kien T. and Heath, Jr., Robert W.},
  title   = {Effects of Channel Aging in Massive {MIMO} Systems},
  journal = {Journal of Communications and Networks}, volume = {15}, number = {4}, pages = {338--351},
  month   = aug, year = {2013}, doi = {10.1109/JCN.2013.000065}
}

@inproceedings{kingma2015adam,
  author    = {Diederik P. Kingma and Jimmy Lei Ba},
  title     = {Adam: A Method for Stochastic Optimization},
  booktitle = {Proc. Int. Conf. Learning Representations (ICLR)}, year = {2015}
}

@misc{bosch2026unifying,
  author = {Cyrill B\"osch and Yi\u{g}ithan Gediz and Clara C. Wanjura and Hakan E. T\"ureci},
  title  = {Unifying Physical Backpropagation},
  howpublished = {arXiv:2608.11585v2}, year = {2026}
}

@misc{companion,
  title = {Coherent but Discordant: Rotate, Don't Invert, to Backpropagate Through an Aging Channel},
  howpublished = {arXiv:2610.23137v1},
  year  = {2026}
}

@misc{keeppace,
  title = {Keep Pace to Learn: Movable Antennas Restore the Backpropagated Gradient Under User Mobility},
  howpublished = {arXiv:2610.23118v1},
  year  = {2026}
}

@misc{lattice,
  title = {Deep Holes and Aperture: Sign Laws for the Empirical Correlation of Lattice-Sampled Random Wave Fields},
  note  = {Companion manuscript},
  year  = {2026}
}

@misc{lessons,
  title = {Negative Results for Reciprocity-Based Over-the-Air Split Learning Under Channel Aging: Stability, Alignment, and Geometry},
  note  = {Preprint},
  year  = {2026}
}

@article{bottou2018,
  author  = {L{\'e}on Bottou and Frank E. Curtis and Jorge Nocedal},
  title   = {Optimization Methods for Large-Scale Machine Learning},
  journal = {SIAM Review},
  volume  = {60}, number = {2}, pages = {223--311}, year = {2018},
  doi     = {10.1137/16M1080173}
}

@misc{ajalloeian2020,
  author = {Ahmad Ajalloeian and Sebastian U. Stich},
  title  = {On the Convergence of {SGD} with Biased Gradients},
  howpublished = {arXiv:2008.00051v2}, year = {2021}
}

@inproceedings{tomasi2011,
  author    = {Beatrice Tomasi and James Preisig and Michele Zorzi},
  title     = {On the Predictability of Underwater Acoustic Communications Performance: The {KAM11} Data Set as a Case Study},
  booktitle = {Proc. 6th ACM Int. Workshop on Underwater Networks (WUWNet)},
  pages     = {1--7}, year = {2011},
  doi       = {10.1145/2076569.2076570}
}

@misc{dahl2026,
  author = {Martin Dahl and Erik G. Larsson},
  title  = {Hybrid Channel Estimation with Quantized Phase Feedback for Over-the-Air Computation},
  howpublished = {arXiv:2601.16054}, year = {2026}
}

@article{huang2009,
  author  = {Huang, Kaibin and Heath, Jr., Robert W. and Andrews, Jeffrey G.},
  title   = {Limited Feedback Beamforming Over Temporally-Correlated Channels},
  journal = {IEEE Transactions on Signal Processing},
  volume  = {57}, number = {5}, pages = {1959--1975}, year = {2009}
}

@techreport{tr38901,
  author      = {{3GPP}},
  title       = {Study on Channel Model for Frequencies from 0.5 to 100 {GHz}},
  institution = {3rd Generation Partnership Project},
  type        = {Technical Report},
  number      = {TR 38.901 V19.5.0},
  year        = {2026}
}

\end{document}